\RequirePackage{fix-cm}
\documentclass[a4paper,11pt]{article}

\usepackage{authblk}
\usepackage[utf8]{inputenc}
\usepackage[T1]{fontenc}
\usepackage{pslatex}
\usepackage{amsfonts,amsmath,amssymb,amsthm}
\usepackage{algorithm}
\usepackage{paralist}
\usepackage{booktabs}
\usepackage{url}
\usepackage{hyperref}
\usepackage{graphicx}
\usepackage{microtype}
\usepackage{url}
\usepackage{hyperref}
\usepackage{makecell}
\usepackage{subcaption}
\usepackage{tikz}
\usepackage{multirow}
\usetikzlibrary{positioning}
\usetikzlibrary{decorations.pathreplacing}
\newcommand{\N}{\mathbb{N}}

\newcommand{\F}{\mathbb{F}}

\newtheorem{definition}{Definition}
\newtheorem{lemma}{Lemma}
\newtheorem{theorem}{Theorem}
\newtheorem{remark}{Remark}

\newtheorem{example}{Example}

\tikzset{
    mybrace/.style={decorate,decoration={brace,aspect=#1}}
}

\providecommand{\keywords}[1]{\textbf{\textit{Keywords }} #1}

\begin{document}

\title{How to reconstruct (anonymously) a secret cellular automaton}

\author[1]{Luca Mariot}
\author[1]{Federico Mazzone}
\author[2]{Luca Manzoni}
\author[3]{Alberto Leporati}
	
\affil[1]{{\small Semantics, Cybersecurity and Services Group, University of Twente, Drienerlolaan 5, 7511GG Enschede, The Netherlands} 
	
	{\small \texttt{\{l.mariot, f.mazzone\}@utwente.nl}}}

\affil[2]{{\small Department of Mathematics, Informatics and Geosciences, University of Trieste, Via Valerio 12/1, 34127 Trieste, Italy} 
	
	{\small \texttt{lmanzoni@units.it}}}

\affil[3]{{\small Department of Informatics, Systems and Communication, University of Milano-Bicocca, Viale Sarca 336/14, 20126 Milano, Italy} 
	
	{\small \texttt{alberto.leporati@unimib.it}}}

\maketitle

\begin{abstract}
We consider threshold secret sharing schemes based on cellular automata (CA) that allows for anonymous reconstruction, meaning that the secret can be recovered only as a function of the shares, without knowing the participants' identities. To this end, we revisit the basic characterization of $(2,n)$ threshold schemes based on CA in terms of Mutually Orthogonal Latin Squares (MOLS), and redefine the secret space as the MOLS family itself, showing that the new resulting scheme enables anonymous reconstruction of secret CA rules. Finally, we discuss the trade-off between the number of secret CA that can be shared and the computational complexity of the recovery phase.
\end{abstract}

\keywords{cellular automata $\cdot$ anonymous secret sharing schemes $\cdot$ mutually orthogonal Latin squares $\cdot$ parallel classes $\cdot$ bipermutivity $\cdot$ de Bruijn graphs}

\section{Introduction}
\label{sec:intro}
Secret Sharing Schemes (SSS) are a fundamental cryptographic primitive underlying several key management and secure multiparty computation protocols~\cite{goldreich87,sahai05}. Informally, a SSS enables a \emph{dealer} to share a secret value $S$ among a set of $n$ \emph{participants} (or \emph{players}), by distributing to them so-called \emph{shares} of $S$. The sharing is done in such a way that only certain \emph{authorized subsets} of players can uniquely determine the secret by pooling together their shares.

Historically, SSS have been introduced by Shamir~\cite{shamir79} and Blakley~\cite{blakley79} in the setting of $(t,n)$ \emph{threshold access structures}, meaning that the authorized subsets are all those of cardinality at least $t\le n$. When $t=2$ (i.e., only two players are required to reconstruct the secret), threshold SSS have a nice characterization in terms of combinatorial designs. In fact, one can show that $(2, n)$ perfect schemes are equivalent to sets of $n$ \emph{Mutually Orthogonal Latin Squares} (MOLS)~\cite{stinson-cd}.

The idea of adopting Cellular Automata (CA) to design threshold secret sharing schemes dates back at least two decades. However, early works in this research thread~\cite{delrey05,mariot14} came up with threshold SSS with an additional \emph{adjacency} constraint on the access structure, meaning that the shares must be contiguous CA configurations to recover the secret. Moreover, as shown by~\cite{herranz25}, such schemes cannot be perfect, unless in very trivial cases (i.e. when $t=1$ or $t=n$). On the other hand, Mariot et al.~\cite{mariot20} studied how to construct families of MOLS using Linear Bipermutive CA (LBCA), thereby yielding a perfect SSS with a full $(2,n)$ threshold access structure.

In this paper, we investigate the design of anonymous secret sharing schemes based on CA. To this end, we start from the construction of MOLS introduced in~\cite{mariot20}, observing that the $(2,n)$-threshold scheme of~\cite{mariot18} derived from this construction requires the players to disclose both their shares and associated CA rules. Then, we observe that by redefining the set of possible secrets from the row space of the Latin squares to the family of MOLS itself allows one to partition the secrets into \emph{parallel classes}, enabling anonymous reconstruction. From a practical standpoint, this entails that the players are given shares of a \emph{secret CA local rule}, and the task of any two players is to reconstruct the rule without revealing to one another their respective shares. We then describe the resulting anonymous $(2,n)$-scheme protocol, which include both preimage computation and forward evolution of the underlying CA. Finally, we remark that the efficiency of this scheme is characterized by a trade-off between the number of secrets that can be shared and the computational complexity of the recovery phase. 

The remainder of this paper is organized as follows. Section~\ref{sec:bg} covers preliminary definitions about cellular automata, (anonymous) secret sharing schemes and orthogonal Latin squares used in the paper. Section~\ref{sec:scheme} recalls the basic $(2,n)$ threshold scheme introduced in~\cite{mariot18}, showing why it cannot be anonymous and describes the new scheme, proving its correctness and security properties. Finally, section~\ref{sec:outro} discusses the trade-off between the number of secrets and the computational complexity of the recovery phase, and describes some interesting open problems for further research on the subject.

\section{Background}
\label{sec:bg}
In this section, we cover the basic definitions and results needed in the later sections of the paper. We start by introducing cellular automata as algebraic systems, and continue with a brief description of secret sharing schemes and their connection to orthogonal Latin squares.

\subsection{Cellular Automata}
\label{subsec:ca}
Cellular Automata (CA) are usually described as a particular type of discrete dynamical systems, characterized by a regular lattice of cells whose global state is updated by applying the same local rule in a shift-invariant manner to each cell. Typical questions investigated in this setting revolve around the long-term dynamics induced by the repeated application of the \emph{CA global rule} over multiple time steps, for example the characterization of its fixed points and limit cycles. On the other hand, in this work we consider CA under the perspective of \emph{algebraic systems}, since we are interested only in a single application of the CA global rule~\cite{manzoni26}. In particular, we formally define a \emph{no boundary CA} (NBCA) as follows:

\begin{definition}
	\label{def:ca}
	Let $\Sigma$ be a finite set of size $q \in \N$, and let $d, n \in \N$ such that $d\le n$. Further, let $f: \Sigma^d \to \Sigma$ be a $d$-variable mapping over $\Sigma$. Then, a \emph{no-boundary CA} over the alphabet $\Sigma$ with $n$ input cells and \emph{local rule} $f$ of \emph{diameter} $d$ is a mapping $F: \Sigma^{n} \rightarrow \Sigma^{n-d+1}$ defined for all $x \in \Sigma^n$ as:
	\begin{equation}
		\label{eq:nbca}
		F(x_1, \ldots, x_n) = (f(x_1, \ldots, x_d), f(x_2, \ldots, x_{d+1}),
		\ldots, f(x_{n-d+1}, \ldots, x_n)) \enspace .
	\end{equation}
\end{definition}
In other words, a NBCA is a particular type of vectorial function where each output coordinate $i \in \{1,\cdots, n\}$ is defined as the application of the CA local rule $f$ to the neighborhood composed of the $i$-th input cell and the $d-1$ cells to its right. Notice that the global rule of a NBCA cannot be iterated for an indefinite number of steps, since the size of the cellular array ``shrinks'' by $d-1$ cells at each iteration. Other models, such as \emph{periodic boundary CA}~\cite{mariot19}, retain the whole cellular array from one iteration to the next, turning effectively a CA into a discrete dynamical system. However, as remarked earlier, we will be interested only in NBCA seen as algebraic systems, with a single application of the CA global rule $F$. 

The most natural way to represent the local rule $f: \Sigma^d \to \Sigma$ is by means of a lookup table of $q^d$ rows, which defines the next state $f(x)$ of a cell for each configuration $x \in \Sigma^d$ of its neighborhood. When $\Sigma = \F_2 = \{0, 1\}$, the lookup table is a Boolean function of $d$ variables. In this case, the decimal encoding of the truth table's output column is also called the \emph{Wolfram code} of the rule~\cite{wolfram83}.

Another useful way to encode the local rule $f$ of a CA is the \emph{de Bruijn} graph $G_f = (V,E)$, where $V = \Sigma^{d-1}$ and given $u,v \in V$ the pair $(u,v)$ is connected by a directed edge if and only if they \emph{overlap} respectively on the rightmost and leftmost $d-2$ coordinates. A local rule $f: \Sigma^d \to \Sigma$ can then be represented as a labeling function $l_f: E \to \F_2$ on the edges of $G_f$, where for each $(u,v) \in E$, one defines $l(u,v) = f(u \odot v)$, with $u \odot v \in \F_2^d$ denoting the \emph{fusion} operator of $u$ and $v$ as defined in~\cite{sutner91}. Intuitively, $u \odot v$ is the vector of $d$ variables defined by adding the rightmost coordinate of $v$ to $u$. The output of a CA equipped with rule $f$ corresponds to a path on the edges of $G_f$. As an example, Figure~\ref{fig:example} depicts the de Bruijn graph of rule $f: \F_2^3 \to \F_2$ defined as $f(x_1, x_2, x_3) = x_1 \oplus x_2 \oplus x_3$, whose Wolfram code is 150, along with its truth table.

\begin{figure}[t]
	\begin{minipage}{0.4\textwidth}
		\centering
		\resizebox{!}{5cm}{%
			\Large
			\begin{tikzpicture}
				[->,auto,node distance=1.5cm, every loop/.style={min distance=12mm},
				empt node/.style={font=\sffamily,inner sep=0pt,outer sep=0pt},
				circ node/.style={circle,thick,draw,font=\sffamily\bfseries,minimum
					width=0.8cm, inner sep=0pt, outer sep=0pt}]
				
				\node [empt node] (e1) {};
				\node [circ node] (n00) [above=1.75cm of e1] {$00$};
				\node [circ node] (n01) [right=1.75cm of e1] {$01$};
				\node [circ node] (n10) [left=1.75cm of e1] {$10$};
				\node [circ node] (n11) [below=1.75cm of e1] {$11$};       
				
				\draw [->, thick, shorten >=0pt,shorten <=0pt,>=stealth] (n00) 
				edge[bend left=20] node (f5) [above right]{$1$} (n01);
				\draw [->, thick, shorten >=0pt,shorten <=0pt,>=stealth] (n01)
				edge[bend left=20] node (f5) [below right]{$0$} (n11);
				\draw [->, thick, shorten >=0pt,shorten <=0pt,>=stealth] (n11)
				edge[bend left=20] node (f5) [below left]{$0$} (n10);
				\draw [->, thick, shorten >=0pt,shorten <=0pt,>=stealth] (n10)
				edge[bend left=20] node (f5) [above left]{$1$} (n00);
				\draw[->, thick, shorten >=0pt,shorten <=0pt,>=stealth] (n10) edge[bend
				left=20] node (f1) [above]{$0$} (n01);
				\draw[->, thick, shorten >=0pt,shorten <=0pt,>=stealth] (n01)
				edge[bend left=20] node (f2) [below]{$1$} (n10);
				\draw[->, thick, shorten >=0pt,shorten <=0pt,>=stealth] (n00) edge[loop
				above] node (f3) [above]{$0$} ();
				\draw[->, thick, shorten >=0pt,shorten <=0pt,>=stealth] (n11) edge[loop
				below] node (f4) [below]{$1$} ();
			\end{tikzpicture}
		}
	\end{minipage}%
	\begin{minipage}{0.6\textwidth}
		\centering
		\begin{tabular}{cc|c}
			\hline
			$u \to v$ & $x = u \odot v$ & $f(x) = l_f(u,v)$ \\ 
			\hline
			$00 \to 00$            & 000            &   0      \\
			$10 \to 00$            & 100            &   1      \\
			$01 \to 10$            & 010            &   1      \\
			$11 \to 10$            & 110            &   0      \\
			$00 \to 01$            & 001            &   1      \\
			$10 \to 01$            & 101            &   0      \\
			$01 \to 11$            & 011            &   0      \\
			$11 \to 11$            & 111            &   1      \\
			
			\hline
		\end{tabular}
	\end{minipage}
	\caption{Example of orthogonal labelings for the de Bruijn graph $G_{2,2}$ induced by the CA local rules 90 and 150 of diameter $d=3$.}
	\label{fig:example}
\end{figure}
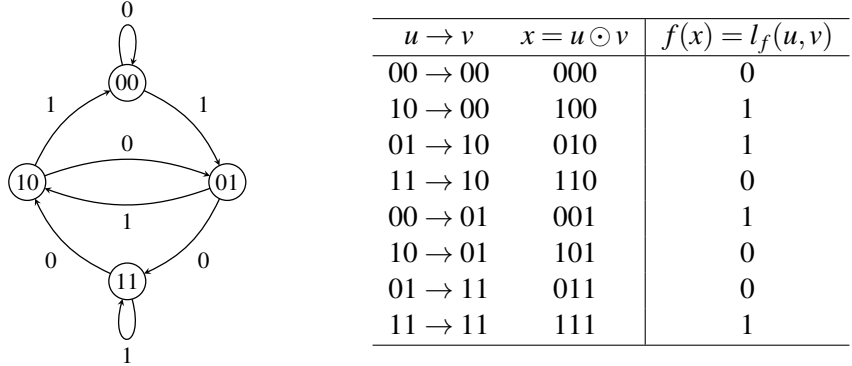

A local rule $f: \Sigma^d \to \Sigma$ is called \emph{leftmost} (respectively, \emph{rightmost}) \emph{permutive} if, by fixing all input coordinates $x_2, \ldots, x_d$ (respectively, $x_1, \ldots, x_{d-1}$) to any constant value, the restriction of $f$ on the remaining variable is a permutation over $\Sigma$. Accordingly, a local rule is called \emph{bipermutive} if it is both leftmost and rightmost permutive. For the binary case, this means that a bipermutive rule $f: \F_2^d \to \F_2$ can be written as $f(x_1,\ldots, x_d) = x_1 \oplus g(x_2, \ldots, x_{d-1}) \oplus x_d$, where $g: \F_2^{d-2} \to \F_2$ is the \emph{generating function} of $f$~\cite{leporati14}). Bipermutive CA are surjective, since the labels of the outgoing (respectively, ingoing) edges of any vertex $v \in V$ of the de Bruijn graph induce a permutation of the alphabet (therefore, one the edges can be traversed in both directions).

Beside binary CA, we consider the case of CA defined over the finite field alphabet $\F_q$, with its size $q$ being a power of a prime number. A  local rule $f: \F_q^d \to \F_q$ is called \emph{linear} if it is defined as a linear combination of the input variables, that is	$f(x_1,\ldots,x_d) = a_1x_1 + \ldots + a_dx_d$ for all $x = (x_1,\ldots,x_d) \in \F_q^d$, where $a_1,\ldots a_d \in \F_q$ and the sum and multiplication correspond to the field operations. A NBCA $F: \F_q^n \to \F_q^{n-d+1}$ defined by a linear local rule can then be expressed as a linear map $F(x) = M_F\cdot x^\top$, where $M_F$ is a $(n-d+1)\times n$ \emph{transition matrix} defined by successive shifts of the coefficients $a_1, \ldots, a_d$. One can also associate to a linear local rule $f$ the polynomial $p_f(X) = a_1 + a_2X + \ldots + a_{d}X^{d-1}$. Under the lens of error-correcting codes, $M_F$ and $p_f$ are respectively the generator matrix and the generator polynomial of a cyclic code~\cite{mariot18}. Clearly, any linear rule is bipermutive if and only if both $a_1$ and $a_d$ are not null. In this case, the CA is called a Linear Bipermutive CA (LBCA).

\subsection{Secret Sharing Schemes}
Informally, a \emph{secret sharing scheme} is a protocol where a \emph{dealer} wishes to share a \emph{secret} $S$ (such as a private key to sign documents) with a set of $n$ participants or players $\mathcal{P} = \{P_1,P_2,\cdots,P_n\}$. During the setup phase of the scheme, the dealer computes $n$ \emph{shares} of $S$, denoted as $B_1,\ldots, B_n$, and distribute them respectively to the players $P_1,\ldots P_n$. At a later stage, any \emph{authorized subset} $A \subseteq \mathcal{P}$ can recover the secret $S$ by pooling together the shares of the corresponding players. The \emph{access structure} $\Gamma \subseteq 2^{\mathcal{P}}$ lists all authorized subsets of the scheme, and it is defined as the union-closure of its \emph{basis} $\Gamma_0$, which specifies all minimal authorized subsets. In $(t,n)$ \emph{threshold schemes} (such as those introduced by Shamir~\cite{shamir79} and Blakley~\cite{blakley79}), every subset of at least $t$ players is authorized; therefore, the basis of the access structure is defined as $\Gamma_0 = \{A \subseteq \mathcal{P}: |A| = t\}$.

The security of a secret sharing scheme is characterized by the information that unauthorized subsets can gain on the value of the secret. In particular, a scheme is called \emph{perfect} if it does not leak any information about $S$ by knowing the shares of any unauthorized subset. More formally, if the secret $S$ is drawn from $\mathcal{S}$ according to a probability distribution $Pr(S)$, then for any unauthorized subset $U \notin \Gamma$ it holds $Pr(S|U) = Pr(S)$. Assuming that a suitable notion of size on the secret and the shares (e.g., the number of bits required to encode them), a perfect secret sharing scheme is called \emph{ideal} if for all $S \in \mathcal{S}$ the size of the shares induced by any secret $S$ equals the size of $S$. 

Most of the constructions for threshold schemes assumes that the identities of the players in an authorized subset are known. This usually requires a separate identification protocol during the reconstruction phase. \emph{Anonymous secret sharing schemes}, initially introduced by Stinson and Vanstone~\cite{stinson88}, circumvent this problem by recovering the secret only as a function of the shares value, without associating them to the players' identities. Phillips and Phillips~\cite{phillips92} analyzed the conditions under which an anonymous secret sharing scheme can be ideal, showing that this property is met only in the $(1,n)$ and $(n,n)$ access structures. Later, Blundo and Stinson~\cite{blundo97} proved lower bounds on the size of the share sets in any $(t,n)$ anonymous secret sharing scheme.

More recently, other works~\cite{eldridge24,bishop25,con25} studied anonymous secret sharing with additional properties, such as unlinkability of shares from different unauthorized subsets. We do not consider such schemes in the scope of this work, referring the reader to~\cite{con25} for more information about them.

\subsection{Latin Squares}
\label{subsec:ls}
We recall only the basic definitions about Latin squares that are relevant for the characterization of $(2,n)$ threshold schemes. A thorough treatment of this subject can be found in~\cite{denes-ls}. Given $n \in \N$, we denote by $[N] = \{1,\ldots, N\}$ the set of the first $N$ integer positive numbers. A Latin square of order $N \in \N$ is an $N \times N$ matrix $L$ with entries from $[N]$ such that for all $i, j, k \in [N]$ with $j\neq k$, one has that $L(i, j) \neq L(i, k)$ and $L(j, i) \neq L(k, i)$. Equivalently, every row and column of a Latin square is a permutation of $[N]$. Two Latin squares $L_1, L_2$ of order $N \in \N$ \emph{orthogonal} if for all distinct pairs of coordinates $(i, j), (i', j') \in [N] \times [N]$ it holds that $(L_1(i,j), L_2(i,j)) \neq (L_1(i', j'), L_2(i', j'))$. In other words, the \emph{superposition} of two orthogonal Latin squares is a permutation of the Cartesian product $[N] \times [N]$. A family of $n$ \emph{Mutually Orthogonal Latin Squares} (also denoted as $n$-MOLS) is a set of $n$ Latin squares that are pairwise orthogonal

As shown for example in~\cite{stinson-cd}, families of $n$-MOLS can be used to construct $(2,n)$ threshold secret sharing schemes. In particular, suppose that $L_1,\ldots L_n$ are Latin squares of order $N$ such that $L_i$ and $L_j$ are orthogonal for each $i\neq j$. We assume that this family is public, hence known to any participant in the scheme (and possibly also to adversaries). Then, one can define the secret and the share spaces as $\mathcal{S} = \mathcal{B} = [N]$. The dealer encodes $S \in [N]$ as a row of the squares, and selects a random column $R \in [N]$. The shares to be distributed to the players are then the entries $L_1(S,R), \ldots , L_n(S,R)$. Later, if any two players $i, j$ combine their shares, then since $L_i$ and $L_j$ are orthogonal they know that the pair $(L_i(S,R), L_j(S,R))$ occurs exactly once the superposed squares, allowing to retrieve uniquely the row and column coordinates, and thus the secret row $S$. On the other hand, if a single player $i$ tries to reconstruct $S$ on their own, the value $L_i(S,R)$ occurs $N$ times on the Latin square $L_i$. Therefore, if the dealer selected the random column $R$ with uniform probability, every value of the secret is equally likely. Thus, the scheme is perfect and also ideal, since the secret and share sets coincide. On the other hand, the scheme is not anonymous, since each player must know their associated Latin square to allow reconstruction.

Mariot et al.~\cite{mariot20} investigated families of MOLS generated by CA, motivated by the application to $(2,n)$ threshold schemes. The first step of is to define the \emph{Cayley table} induced by a NBCA. To this end, assume that a total order is defined on $\Sigma^{d-1}$ and that $\phi: \Sigma^{d-1} \rightarrow [N]$ is a monotone bijective mapping between $\Sigma^{d-1}$ and $[N]$, with the latter having the usual order of natural numbers. The inverse mapping of $\phi$ is denoted by $\psi = \phi^{-1}$. Then, \emph{Cayley table} associated to the NBCA $F: \Sigma^{n} \to \Sigma^{n-d+1}$ is the $N \times N$ matrix $C_F$ with entries from $[N]$ such that $C_{F}(i,j) = \phi(F(\psi(i)||\psi(j)))$ for all $1 \le i,j \le N$, where $||$ denotes concatenation. Thus, the output of the CA corresponds to the entry of the Cayley table at the row and column coordinates represented by the left and the right half of the CA input configuration, respectively.

Next, the authors of~\cite{mariot20} showed that the Cayley table associated to any NBCA $F: \Sigma^{2(d-1)} \rightarrow \Sigma^{d-1}$ equipped with a bipermutive local rule $f:\Sigma^{d}\rightarrow \Sigma$ is a Latin square of order $N=q^{d-1}$\footnote{The same result was independently observed much earlier under a different guise by Eloranta~\cite{eloranta93}.} Then, they proved the following characterization of LBCA pairs whose associated Latin squares are orthogonal:
\begin{theorem}[\cite{mariot20}]
\label{thm:lin-oca}
    Let $F,G:\F_q^{2(d-1)} \to \F_q^{d-1}$ be two linear bipermutive CA, and let $p_f,p_g \in \F_q[X]$ be the respective associated polynomials. Then, the corresponding Latin squares $C_F$ and $C_G$ of order $N=q^{d-1}$ are orthogonal if and only if $\gcd(p_f,p_g) = 1$, i.e. if and only if $p_f$ and $p_g$ are relatively prime.
\end{theorem}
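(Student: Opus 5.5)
The plan is to reduce orthogonality of $C_F$ and $C_G$ to invertibility of a single square matrix over $\F_q$, and then to identify that matrix with a Sylvester resultant matrix. Let $k=d-1$ and $N=q^k$. By definition of orthogonality, $C_F$ and $C_G$ are orthogonal exactly when the map
\[
\Phi:\F_q^{2k}\to\F_q^{2k},\qquad z\mapsto (F(z),G(z))
\]
is a bijection. Here $z=\psi(i)\|\psi(j)$ ranges over all $q^{2k}$ row/column pairs, and $\phi$ is a bijection, so distinct cells give distinct superposed pairs iff $\Phi$ is injective. Since $F$ and $G$ are linear, $\Phi(z)=M\cdot z^\top$, where $M$ is the $2k\times 2k$ matrix obtained by stacking the $k\times 2k$ transition matrices $M_F$ and $M_G$. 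Thus the squares are orthogonal iff $\det M\neq 0$.

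The key step is to recognize $M$. Each row of $M_F$ is a shift of $(a_1,\dots,a_d)$, and there are $k$ such rows in a band of width $2k$. The same holds for $M_G$ with coefficients $(b_1,\dots,b_d)$. This is exactly the Sylvester matrix of the polynomials $p_f$ and $p_g$, both of degree $k=d-1$, possibly up to reversing the coefficient order in every row, which only reverses columns and changes the determinant by a sign. Bipermutivity guarantees $a_d,b_d\neq 0$, so both polynomials have exact degree $k$ and the Sylvester/resultant theory applies without degenerate leading coefficients. The classical fact $\det \mathrm{Syl}(p_f,p_g)=\mathrm{Res}(p_f,p_g)$ then gives: $\mathrm{Res}\neq 0$ iff $p_f,p_g$ have no common factor, that is, iff $\gcd(p_f,p_g)=1$.

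To keep things self-contained I would prove the resultant fact directly via the standard kernel argument. A nonzero vector $(u,v)$ in the left kernel of $M$ corresponds to polynomials $u(X),v(X)$ of degree $<k$, not both zero, with $u\,p_f+v\,p_g=0$.
\begin{itemize}
\item If $\gcd=h$ is nontrivial, take $u=p_g/h$ and $v=-p_f/h$; both have degree $<k$, so $M$ is singular.
\item Conversely, if $u\,p_f=-v\,p_g$ with coprime $p_f,p_g$, then $p_g\mid u$. Since $\deg u<k=\deg p_g$, this forces $u=0$, and then $v=0$.
\end{itemize}
Hence the left kernel is trivial exactly when the polynomials are coprime.

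The main obstacle is the bookkeeping needed to check that stacking $M_F$ over $M_G$ really yields the (transposed or reversed) Sylvester matrix. In particular, one must match the row/column indexing conventions of the Cayley table, the left half $\psi(i)$ versus the right half $\psi(j)$, with the polynomial convention $p_f=a_1+a_2X+\dots+a_dX^{d-1}$. Since only nonvanishing of the determinant matters, any permutation of rows or columns is harmless, so I would fix conventions once and verify on the example $d=2$ or $d=3$.
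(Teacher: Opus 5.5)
Your proof is correct and follows essentially the same route as the cited source \cite{mariot20} (this paper only quotes the result without proof): orthogonality is reduced to invertibility of the $2(d-1)\times 2(d-1)$ matrix obtained by stacking $M_F$ over $M_G$, that matrix is recognized as the Sylvester matrix of $p_f,p_g$, and your kernel argument is just the standard proof that the resultant is nonzero exactly when the two polynomials are coprime. The bookkeeping you flag as the main obstacle is immediate. If you identify column $c$ with $X^{c-1}$, row $i$ of $M_F$ is exactly $X^{i-1}p_f(X)$, so no reversal is needed, and the row/column split of the Cayley table plays no role because $\Phi$ acts on the whole concatenated configuration.
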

Hence, one can construct a family of $n$-MOLS (or equivalently, a $(2,n)$ scheme) based on LBCA by finding a set of $n$ polynomials of degree $k=d-1$ and with a nonzero constant term that are pairwise coprime. The authors of~\cite{mariot20} further proved that the size of a maximal family of Mutually Orthogonal Latin Squares based on linear CA (also called linear MOCA) equals:
\begin{equation}
	\label{eq:gauss}
	N_k = I_k + \sum_{j=1}^{\lfloor n/2 \rfloor} I_j \enspace ,
\end{equation}
where $I_r$ denotes for $r \in \N$ the number of irreducible polynomials of degree $r$ over $\F_q$, computed using Gauss's formula~\cite{gauss-irr}.

\section{Anonymous Secret Sharing with CA}
\label{sec:scheme}

\subsection{Basic $(2, n)$ Threshold Scheme based on MOCA}
We start by first recalling how the basic $(2, n)$ threshold scheme based on MOCA proposed in~\cite{mariot18} works. Let $\Sigma$ be an alphabet of $q$ symbols, $d \in \N$ and assume that we have $n$ players $P_1, P_2, \ldots, P_n$. The secret and share spaces coincide with the set $\Sigma^{d-1}$ of configurations of $d-1$ cells. Hence, we have $|\mathcal{S}| = |\mathcal{B}| = q^{d-1}$. The dealer publishes a set of $n$ bipermutive local rules $f_1, \ldots f_n: \Sigma^{d} \to \Sigma$, not necessarily linear, that give rise to a set of $n$-MOCA.

\paragraph{Setup Phase.} The setup phase of the protocol works as follows:
\begin{enumerate}
	\item The dealer selects the secret $S \in \Sigma^{d-1}$ to be shared, and concatenates it with a random block $R \in \Sigma^{d-1}$, thus obtaining the input configuration $x = S||R \in \Sigma^{2(d-1)}$.
	\item For each $i \in [n]$, the dealer computes the NBCA $F_i: \Sigma^{2(d-1)} \to \Sigma^{d-1}$ equipped with rule $f_i$ on the configuration $x$, obtaining $B_i = F_i(x) \in \Sigma^{d-1}$
	\item For each $i \in [n]$, the dealer sends $B_i$ to the player $P_i$.
\end{enumerate}

\paragraph{Recovery Phase.} Given two players $P_i, P_j$ and their respective shares $B_i, B_j$, the recovery phase unfolds in the following steps:
\begin{enumerate}
	\item $P_i$ and $P_j$ fetch the CA local rules $f_i, f_j$ published by the dealer, and construct the corresponding \emph{coupled de Bruijn graph} $G_{i,j}$, which is simply the superposition of the de Bruijn graphs of $f_i$ and $f_j$. Thus, each edge $(u,v)$ of $G_{i,j}$ is labeled with with the ordered pair $(f_i(u \odot v), f_j(u \odot v))$.
	\item $P_i$ and $P_j$ pool their shares $B_i, B_j$ and find the path $W$ on the edges of $G_{i,j}$ that is labelled as $((B_{i,1}, B_{j,1}), \ldots, (B_{i,d-1}, B_{j,d-1}))$. Notice that there is a unique such path, since the Latin squares of $f_i$ and $f_j$ are orthogonal.
	\item Determine the original input configuration $x \in \Sigma^{2(d-1)}$ by computing the fusion operator $\odot$ on the vertices visited by the path $W$. The secret can then be returned as the left half of $x$.
\end{enumerate}
In particular, the reconstruction of the unique path on the edges labelled by the two superposed shares can be performed with the {\sc Invert-OCA} algorithm described in~\cite{mariot18}. Although this scheme is perfect and ideal, as in any MOLS-based construction, it does not allow for anonymous reconstruction, since the players $i$ and $j$ need to know the corresponding local rules $f_i$ and $f_j$ that the dealer used to compute their shares.

\subsection{Modified MOCA-based Anonymous Scheme}
The scheme described above encodes the secret and the shares as blocks of $d-1$ cells, respectively encoding the row and the entries of the Latin squares generated by the family of MOCA. Each MOCA, on the other hand, is assigned to a player.

We now take a different approach: \emph{we encode the secret as one of the Latin squares, or equivalently one of the CA rules}, i.e. $\mathcal{S} = \{f_1, \ldots, f_n\}$. On the other hand, the space of the shares becomes $\mathcal{B} = \Sigma^{2(d-1)}$, that is, the set of all possible input configurations of the CA. Further, the number of players that can participate in the scheme now corresponds to the number of preimages that map to a specific output block $y \in \Sigma^{d-1}$, which are $q^{d-1}$ due to the balancedness property of surjective CA~\cite{mariot17}.

\paragraph{Setup Phase.} The setup phase of the modified scheme is as follows:
\begin{enumerate}
	\item The dealer chooses the secret CA rule index $S \in \{1, \ldots, n\}$ and selects a random output block of $d-1$ cells $R \in \Sigma^{d-1}$
	\item The dealer computes the counterimage of $S$ under the secret rule $f_S$, that is the set $F_S^{-1}(R) = \{x \in \Sigma^{2(d-1)}: F_S(x) = R\}$. This can be done by constructing the $q^{d-1}$ paths on the edges of the de Bruijn graph $G_{f_S}$ labelled with $R$, starting from each vertex $v \in \Sigma^{d-1}$.
	\item Denoting $F_S^{-1}(R) = \{B_1, B_2, \ldots, B_{q^{d-1}}\}$, the dealer distributes the share $B_i$ to the player $P_i$, for all $i \in [q^{d-1}]$.
\end{enumerate}
Thus, the setup phase entails selecting a \emph{single} CA in the MOCA family as the secret, and then computing the set of preimages of a random output block of this CA, which constitute the shares distributed to the players.

\paragraph{Recovery Phase.} For the recovery, we assume that the dealer published all local rules $f_1, \ldots, f_n$. Suppose now that players $P_i$ and $P_j$ wish to reconstruct the secret. The recovery starts with a non-interactive step, where both players perform computations on their own without interacting with each other. We describe below the non-interactive step of $P_i$ (the step of $P_j$ is completely symmetrical, changing each $i$ index to $j$):
\begin{enumerate}
	\item For all $k \in [n]$, player $P_i$ evaluates the $k$-th CA over their share, i.e. $P_i$ computes $y_{i, k} = F_k(B_i)$.
	\item For all $k \in [n]$, $P_i$ computes the counterimage of $y_{i,k}$ under rule $f_k$, i.e. the set $A_{i,k} = F_k^{-1}(y_{i,k}) = \{x \in \Sigma^{2(d-1)}: F_k(x) = y_{i,k}\}$.
\end{enumerate}
At this point, $P_i$ and $P_j$ have constructed two families of counterimages, respectively $\mathcal{A}_{i} = \{A_{i,1}, \ldots, A_{i,n}\}$ and $\mathcal{A}_{j} = \{A_{j,1}, \ldots, A_{j,n}\}$. The interactive phase now works as follows:
\begin{enumerate}
	\item $P_i$ and $P_j$ compute the intersection $\mathcal{A}_i \cap \mathcal{A}_j$.
	\item The result of the intersection uniquely identifies the secret CA rule $f_S$ selected by the dealer during the setup phase.
\end{enumerate}

\subsection{Analysis of the Scheme}
In what follows, we prove the correctness of the above scheme and analyze its security, showing that it is both perfect and anonymous.

\begin{lemma}
	The intersection computed by $P_i$ and $P_j$ during the last step of the protocol yields a single element, and this element is the counterimage $A_S = F_S^{-1}(R)$ computed by the dealer in the setup phase, which uniquely identifies the secret CA rule $f_S$.
\end{lemma}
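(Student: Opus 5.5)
We work in terms of the Cayley tables. For each $k \in [n]$, the sets $F_k^{-1}(y)$ with $y \in \Sigma^{d-1}$ partition $\Sigma^{2(d-1)}$ into $q^{d-1}$ blocks, each of size $q^{d-1}$ by the balancedness of surjective (bipermutive) CA. Via the bijection $x = \psi(a)\|\psi(b) \mapsto (a,b)$, the block $F_k^{-1}(y)$ is exactly the set of cells of $C_{F_k}$ that contain the symbol $\phi(y)$. Write $\Pi_k$ for this partition. The proof has two parts: first show that $A_S$ appears in both families, then show that no other block appears in both. Throughout, we treat $\mathcal{A}_i$ and $\mathcal{A}_j$ as families of subsets of $\Sigma^{2(d-1)}$, so that the intersection is taken between sets of configurations.

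\textbf{Membership.} Since $B_i, B_j \in F_S^{-1}(R)$, we have $y_{i,S} = F_S(B_i) = R$. Hence $A_{i,S} = F_S^{-1}(R) = A_S$, and likewise $A_{j,S} = A_S$. Therefore $A_S \in \mathcal{A}_i \cap \mathcal{A}_j$.

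\textbf{Uniqueness.} Suppose some set $A$ lies in both families, say $A = A_{i,k} = A_{j,l}$ for indices $k, l \in [n]$.

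\begin{enumerate}
\item Since $B_i \in A_{i,k}$, we get $B_i \in A$, and since $B_j \in A_{j,l}$, we get $B_j \in A$. So $A$ contains the two distinct points $B_i \neq B_j$.
\item $A$ is a block of $\Pi_k$, so all its points have the same image under $F_k$. In particular $F_k(B_i) = F_k(B_j)$. Since $B_i, B_j \in A_S$, we also have $F_S(B_i) = F_S(B_j) = R$.
\item Suppose $k \neq S$. Then $C_{F_k}$ and $C_{F_S}$ are orthogonal, so the pair map $x \mapsto (F_k(x), F_S(x))$ is injective on $\Sigma^{2(d-1)}$. The two distinct points $B_i \neq B_j$ would share the same pair, which is a contradiction. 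Hence $k = S$, and by the symmetric argument $l = S$.
\item It follows that $A = A_{i,S} = A_S$.
\end{enumerate}

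The main point to verify is that no coincidence of sets can arise in any other way: for instance, $A_{i,k} = A_{j,l}$ with $k \neq l$, neither equal to $S$. Step 3 already rules this out, because it only uses that $A$ is a block containing both $B_i$ and $B_j$. Orthogonality also implies that blocks from different partitions $\Pi_k \neq \Pi_{k'}$ meet in exactly one point, so they are never equal as sets. As a result, $A_S$, viewed as a set, determines the index $S$ uniquely. This is what makes the intersection identify the secret rule $f_S$, whose index the players can read off by testing which published $F_k$ is constant on $A_S$.

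We will also state the implicit assumptions the argument relies on: $B_i \neq B_j$, which holds because the shares are distinct preimages, and the rules are pairwise distinct with pairwise orthogonal Cayley tables, which holds by the MOCA hypothesis.
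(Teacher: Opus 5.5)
Your proof is correct and follows the same route as the paper. The counterimages of each $F_k$ form a parallel class $\Pi_k$, and orthogonality forces blocks from different classes to share at most one configuration, so since both $B_i$ and $B_j$ lie in $A_S$, no block other than $A_S$ can appear in both $\mathcal{A}_i$ and $\mathcal{A}_j$. Your version is simply more explicit: it separates membership from uniqueness and states the facts the paper leaves implicit, namely $B_i \neq B_j$ and that blocks from different classes are never equal as sets.
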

\begin{proof} For any $i \in [n]$, consider the CA $F_i: \Sigma^{2(d-1)} \to \Sigma^{d-1}$ equipped with rule $f_i: \Sigma^{d}  \to \Sigma^d$. The counterimages $F_i^{-1}(y)$ of the output blocks $y \in \Sigma^{d-1}$ of the CA partition the input space of $\Sigma^{2(d-1)}$. In fact, fixing an output block $y \in \Sigma^{d-1}$ and determining its counterimage basically means to construct the set of coordinates that feature $y$ as an entry on the Latin square of $F_i$. Due to the property of the Latin square, two different outputs $y\neq y'$ cannot share the same row and column coordinates. In the terminology of combinatorial designs, the partition $\Pi_i= \{F_i^{-1}(y): y \in \Sigma^{d-1}\}$ is also called a $1$-\emph{parallel class}, since it is a set of subsets all of the same size ($q^{d-1}$, due to the balancedness property) where each $x \in \Sigma^{2(d-1)}$ appears exactly once.

Consider now the set of all parallel classes $\Pi_1, \ldots, \Pi_n$ induced by the $n$-MOCA $f_1,\ldots, f_n$. Then, any two subsets belonging to different classes $\Pi_i, \Pi_j$ can intersect in at most one preimage. Indeed, suppose that $\Pi_i$ and $\Pi_j$ have two subsets of counterimages intersecting on more than one preimage: then, it means that the superposition of the Latin squares generated by $f_i, f_j$ have a repeated pair of entries, contradicting the orthogonality property. 

Therefore, since the shares held by $P_i$ and $P_j$ have been taken from the \emph{same} subset of preimages $F_S^{-1}(R)$ generated by the dealer in the setup phase, by construction $\mathcal{A}_i \cap \mathcal{A}_j$ can intersect only in that subset.
\end{proof}

Once the two players have identified the unique parallel class $\Pi_S$ to which the subset resulting from the intersection belongs to, they can easily determine what is the secret CA rule $f_S$. The next result proves that the new scheme is also perfect, and moreover it allows anonymous reconstruction:

\begin{lemma}
	\label{lm:sec}
	The modified MOCA-based scheme is a perfect and anonymous $(2,q^{d-1})$ threshold secret sharing scheme.
\end{lemma}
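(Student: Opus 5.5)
The plan is to prove the three properties of Lemma~\ref{lm:sec} separately: the threshold (correctness) property, perfectness, and anonymity.

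\emph{Correctness} for any two players comes directly from the previous lemma. The two share sets $\mathcal{A}_i$ and $\mathcal{A}_j$ meet exactly in the block $F_S^{-1}(R)$. That block belongs to the parallel class $\Pi_S$, so it identifies $f_S$. Since there are $q^{d-1}$ distinct preimages of $R$, the scheme supports $q^{d-1}$ players, which gives the parameters $(2,q^{d-1})$.

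\emph{Perfectness} requires showing that a single share $B_i$ gives no information about $S$. Assume $S$ is drawn from some distribution $Pr(S)$ on $\{1,\ldots,n\}$ and $R$ is uniform on $\Sigma^{d-1}$. Each player gets one element of $F_S^{-1}(R)$; I will assume the assignment of preimages to players is a uniformly random bijection, and make this explicit in the protocol. Under these assumptions I compute $Pr(B_i=x \mid S=k)$ for a fixed $x\in\Sigma^{2(d-1)}$. The event $x\in F_k^{-1}(R)$ holds exactly when $R=F_k(x)$, which has probability $q^{-(d-1)}$. Given that event, $x$ is the share assigned to $P_i$ with probability $q^{-(d-1)}$. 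Hence $Pr(B_i=x\mid S=k)=q^{-2(d-1)}$, independently of $k$, and Bayes' rule gives $Pr(S=k\mid B_i=x)=Pr(S=k)$.

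The same argument can be phrased combinatorially. Because every $\Pi_k$ partitions $\Sigma^{2(d-1)}$, each share $x$ lies in exactly one block of every parallel class, so it is equally compatible with every secret rule. This is the step that actually uses the structure. The point to handle carefully is how the dealer assigns preimages to players: if the assignment were deterministic, for example ordered by starting vertex, a single share could leak information. I would therefore fix the randomized assignment, or a uniformly random labelling, explicitly in the proof.

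\emph{Anonymity} holds because the recovery procedure uses only the published rules $f_1,\ldots,f_n$ and the share values $B_i,B_j$. The non-interactive step computes $A_{i,k}=F_k^{-1}(F_k(B_i))$ from $B_i$ alone. Equivalently, $A_{i,k}$ is the block of $\Pi_k$ containing $B_i$, so the family $\mathcal{A}_i$ does not depend on the index $i$. The intersection and the lookup of $\Pi_S$ are then functions of the unordered pair $\{B_i,B_j\}$. This matches the Stinson--Vanstone definition of anonymity, in which the reconstruction function depends only on the multiset of shares. Finally, I will note that the scheme is not claimed to be ideal, since $|\mathcal{B}|=q^{2(d-1)}$ exceeds $|\mathcal{S}|=n$; this is consistent with the result of Phillips and Phillips.
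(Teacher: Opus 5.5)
Your proof is correct. The correctness and anonymity parts coincide with the paper's: recovery goes through the previous lemma, and reconstruction depends only on the share values, not on who holds them.

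For perfectness, your core observation is also the paper's: $B_i$ lies in exactly one block of every parallel class, and $R$ is uniform. You turn this into an explicit likelihood computation, $Pr(B_i=x\mid S=k)=q^{-2(d-1)}$ for all $k$. That buys you two things the paper's phrasing (the $n$ candidate blocks are ``equally likely'' to be the dealer's) leaves implicit:
\begin{itemize}
\item independence of $S$ and $B_i$ for an arbitrary prior $Pr(S)$, not just a uniform one;
\item an explicit treatment of how preimages are assigned to players.
\end{itemize}

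However, your side remark that ordering the preimages by starting vertex could leak is wrong. This is the ordering the paper's example uses ($B_1=2$, $B_2=5$, $B_3=12$, $B_4=15$ are in row order), and it is safe:
\begin{itemize}
\item Because each $f_k$ is bipermutive, the row of $C_{F_k}$ corresponding to any $v\in\Sigma^{d-1}$ is a permutation.
\item So every block of $\Pi_k$ contains exactly one configuration of the form $v\,||\,w$.
\item The map sending $R$ to this configuration is a bijection from $\Sigma^{d-1}$ onto $\{v\,||\,w : w\in\Sigma^{d-1}\}$.
\end{itemize}
Hence under the starting-vertex ordering, $B_i$ is uniform on $\{v_i\,||\,w : w\in\Sigma^{d-1}\}$ whatever $S$ is. The scheme is therefore perfect without modifying the protocol. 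What can genuinely leak is a labelling that depends on $k$, for example row order for one rule and column order for another. So your randomized bijection is a sufficient but not a necessary fix.
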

\begin{proof}
	Without loss of generality, suppose that the player $P_i$ tries to reconstruct the secret CA rule starting from the share $B_i$. Then, what $P_i$ can do is to follow the non-interactive step of the recovery phase, computing the output blocks $y_{i,k} = F_k(B_i$ for all CA $f_k$ with $k \in [n]$, and then constructing the corresponding sets of counterimages $A_{i,k} = F_k^{-1}(y_{i,k})$. By construction, each counterimage belongs to a distinct parallel class $\Pi_k$, and they all intersect on $B_i$. Thus, under the assumption that the dealer chose the random output block $R \in \Sigma^{d-1}$ with uniform probability during the setup phase, any of the counterimages generated by $P_i$ is equally likely to be the one computed by the dealer. Therefore, $P_i$ does not obtain any information on the correct parallel class $\Pi_S$, and thus on the secret CA rule $f_S$. 	Concerning anonymity, remark that the intersection of $\mathcal{A}_i \cap \mathcal{A}_j$ returns a subset of preimages, which contains the shares $B_i$ and $B_j$. However, the reconstruction phase does not require to identify $B_i$ and $B_j$ as the shares held by $P_i$ and $P_j$, since the subset is directly associated to the parallel class $\Pi_S$. Thus, the scheme also allows for anonymous reconstruction.
\end{proof}

\begin{remark}
There is an important caveat to consider in the anonymity proof of Lemma~\ref{lm:sec}: namely, how to compute the intersection of $\mathcal{A}_i$ and $\mathcal{A}_j$, which are held respectively by $P_i$ and $P_j$. The most straightforward solution is that both $P_i$ and $P_j$ disclose to each other their sets $\mathcal{A}_i$ and $\mathcal{A}_j$, and then compute the intersection together. However, this procedure would break anonymity. For example, suppose that $P_i$ wants to find out the share held by $P_j$. Then, $P_i$ can simply check what is the value that occur in every preimage set received from $P_j$. By construction, such value corresponds to $B_j$. To solve this issue, the two players can resort to a \emph{Private Set Intersection} (PSI) protocol, which allows two parties to compute the intersection of their private sets without disclosing them~\cite{freedman04}.
\end{remark}

\begin{example}
We conclude this section by presenting a fully worked out example of the scheme described so far. Let $q=2$ and $d=3$, i.e. we work with binary elementary CA. In this case, there are only two orthogonal CA, namely those defined by the linear local rules $f(x_1,x_2,x_3) = x_1 \oplus x_3$ and $g(x_1,x_2,x_3) = x_1 \oplus x_2 \oplus x_3$, i.e. rule 90 and 150. By Theorem~\ref{thm:lin-oca} these rules form orthogonal Latin squares of order $2^{3-1}=4$, because their associated polynomials $p_f(X) = 1+X^2$ and $p_g(X) = 1+X+X^2$ are coprime. Figure~\ref{fig:r150-90} displays the Latin squares associated to rule 90 and 150, and their partition in parallel classes. Remark that in the parallel classes we encode the row-column coordinates $(i,j) \in [4] \times [4]$ as single integer numbers from $1$ to $16$, using lexicographic ordering.
\begin{figure}[t]
	\centering
		\begin{subfigure}{.3\textwidth}
		\centering
		\begin{tikzpicture}
			[->,auto,node distance=1.5cm,
			empt node/.style={font=\sffamily,inner sep=0pt,minimum size=0pt},
			rect node/.style={rectangle,draw,font=\sffamily,minimum size=0.7cm, inner sep=0pt, outer sep=0pt}]
			
			\node [rect node] (s11) {$1$};
			\node [rect node] (s12) [right=0cm of s11] {$2$};
			\node [rect node] (s13) [right=0cm of s12] {$3$};
			\node [rect node] (s14) [right=0cm of s13] {$4$};
			
			\node [rect node] (s21) [below=0cm of s11] {$2$};
			\node [rect node] (s22) [right=0cm of s21] {$1$};
			\node [rect node] (s23) [right=0cm of s22] {$4$};
			\node [rect node] (s24) [right=0cm of s23] {$3$};
			
			\node [rect node] (s31) [below=0cm of s21] {$3$};
			\node [rect node] (s32) [right=0cm of s31] {$4$};
			\node [rect node] (s33) [right=0cm of s32] {$1$};
			\node [rect node] (s34) [right=0cm of s33] {$2$};
			
			\node [rect node] (s41) [below=0cm of s31] {$4$};
			\node [rect node] (s42) [right=0cm of s41] {$3$};
			\node [rect node] (s43) [right=0cm of s42] {$2$};
			\node [rect node] (s44) [right=0cm of s43] {$1$};
			
		\end{tikzpicture}
		\caption{Rule $90$}
	\end{subfigure}%
	\begin{subfigure}{.3\textwidth}
		\centering
		\begin{tikzpicture}
			[->,auto,node distance=1.5cm,
			empt node/.style={font=\sffamily,inner sep=0pt,minimum size=0pt},
			rect node/.style={rectangle,draw,font=\sffamily,minimum size=0.7cm, inner sep=0pt, outer sep=0pt}]
			
			\node [rect node] (s11) {$1$};
			\node [rect node] (s12) [right=0cm of s11] {$4$};
			\node [rect node] (s13) [right=0cm of s12] {$3$};
			\node [rect node] (s14) [right=0cm of s13] {$2$};
			
			\node [rect node] (s21) [below=0cm of s11] {$2$};
			\node [rect node] (s22) [right=0cm of s21] {$3$};
			\node [rect node] (s23) [right=0cm of s22] {$4$};
			\node [rect node] (s24) [right=0cm of s23] {$1$};
			
			\node [rect node] (s31) [below=0cm of s21] {$4$};
			\node [rect node] (s32) [right=0cm of s31] {$1$};
			\node [rect node] (s33) [right=0cm of s32] {$2$};
			\node [rect node] (s34) [right=0cm of s33] {$3$};
			
			\node [rect node] (s41) [below=0cm of s31] {$3$};
			\node [rect node] (s42) [right=0cm of s41] {$2$};
			\node [rect node] (s43) [right=0cm of s42] {$1$};
			\node [rect node] (s44) [right=0cm of s43] {$4$};
			
		\end{tikzpicture}
		\caption{Rule $150$}
	\end{subfigure}%
	\begin{subfigure}{.4\textwidth}
		\centering
		\setlength{\tabcolsep}{2pt}
		\begin{tabular}{c|cc}
		$y$ & $\Pi_{90}$ & $\Pi_{150}$ \\
		\hline
		00 (1) & $\{1,6,11,16\}$ & $\{1,8,10,15\}$ \\
		10 (2) & $\{2,5,12,15\}$ & $\{4,5,11,14\}$ \\
		01 (3) & $\{3,8,9,14\}$ & $\{3,6,12,13\}$ \\
		11 (4) & $\{4,7,10,13\}$ & $\{2,7,9,16\}$ \\
		\end{tabular}
		
		\phantom{M}
		
		\caption{Parallel classes}
	\end{subfigure}%
	
	\caption{Latin squares generated by rules 90 and 150, and related parallel classes.}
	\label{fig:r150-90}
\end{figure}
The secret space is thus composed only of two rules, $\mathcal{S}={\{f_{90}, f_{150}\}}$. The share set coincides with the set $\F_2^4=\{0,1\}^4$ of all $16$ input configurations of $4$ bits, and the scheme has $2^{3-1}=4$ players, i.e. $\mathcal{P} = \{P_1, P_2, P_3, P_4\}$. Suppose now that the dealer selects $f_{90}$ as the secret rule, and samples $10$ as a random output block configuration. Thus, the shares are calculated by computing the $4$ preimages of $10$ under rule 90. From the table of the parallel classes decomposition, this corresponds to the set $\{2,5,12,15\}$ and the shares therein are distributed to the four players: $B_1 = 2$, $B_2 = 5$, $B_3 = 12$ and $B_4 = 15$. Later, let us assume that $P_1$ and $P_3$ want to reconstruct the secret rule by combining their shares. Then, $P_1$ and $P_3$ evaluate the CA $F_{90}, F_{150}: \F_2^4 \to \F_2^2$ respectively on their shares $B_1$ and $B_3$ as inputs, thus obtaining: $y_{1,90} = F_{90}(2) = 10$, $y_{1,150} = F_{150}(2) = 11$, $y_{3,90} = F_{90}(12) = 10$, and $y_{3,150} = F_{150}(12) = 01$. Next, $P_1$ and $P_3$ construct the preimage sets of the output blocks computed in the previous step, thus obtaining the sets:
\begin{align*}
	\mathcal{A}_1 &= \{\{2,5,12,15\}, \{2,7,9,16\}\} \enspace , \\
	\mathcal{A}_3 &= \{\{2,5,12,15\}, \{3,6,12,13\}\} \enspace .
\end{align*}
Finally, the intersection $\mathcal{A}_1 \cap \mathcal{A}_3$ yields the subset $\{2,5,12,15\}$, which allows to uniquely identify the parallel class $\Pi_{90}$, thus recovering the secret rule $f_{90}$.
\end{example}

\section{Discussion and Open Problems}
\label{sec:outro}
While the scheme presented in the previous section is conceptually simple from an abstract point of view, some considerations on its computational complexity are in order. Both in the setup and the recovery phase, the most expensive operation is computing the set of preimages of an output block under the action of a bipermutive CA. In particular, the number of preimages is exponential in the size $d-1$ of an output block, which is determined by the diameter of the CA. Moreover, while in the setup phase  the dealer only constructs a single preimage set (the one corresponding to the random output block $R$ under the action of the secret CA), recovery requires that both players construct a set of preimages \emph{for each CA rule}. This adds a multiplicative factor of $n$, giving an overall complexity of $\mathcal{O}(nq^{d})$\footnote{Of course, we are factoring out the complexity required to compute a single preimage.} for a player $P_i$ to compute their set $\mathcal{A}_i$. Consequently, the higher is the number of CA rules that can be possibly shared, the larger will be the time required for the players to reconstruct the secret. 

These remarks prompt us with two interesting open problems for future research. The first one is to find a more efficient algorithm to determine the common parallel class between two players, that does not require to compute the full preimage sets. An interesting approach here could be to explore secure multiparty computation techniques over the de Bruijn graph representation of the local rules, to check on the fly whether the preimage sets that is being computed by the two players is the same or not. The second problem is to investigate more in depth the \emph{information ratio} of this scheme, i.e. the ratio between the size of the shares and the size of the secrets. Clearly, the scheme described in this paper is not ideal, as the number of shares (which correspond to blocks of size $d-1$) is always higher than the number of secrets that can be shared, which correspond to the maximum number of MOCA that can be constructed for a given diameter. We also note that, while the construction for maximal MOCA families exhibited in~\cite{mariot20} based on linear CA is the best known so far, it is not known whether one can construct even larger families of MOCA with nonlinear CA.

\bibliographystyle{abbrv}
\bibliography{bibliography}

\end{document}